\documentclass{LMCS}

\def\dOi{9(3:21)2013}
\lmcsheading%
{\dOi}
{1--17}
{}
{}
{Oct.~20, 2012}
{Sep.~17, 2013}
{}

\usepackage[utf8]{inputenc}
\usepackage{amsmath}
\usepackage{amssymb}
\usepackage{graphicx}
\usepackage{pgf}
\usepackage{pgfarrows}
\usepackage{pgfnodes}
\usepackage{tikz}

\usepackage{proof}
\usepackage{verbatim}
\usepackage{float}
\usepackage{macros}

\usepackage{enumerate,hyperref}

\begin{document}

\title[Reachability under Contextual Locking]{Reachability under Contextual Locking\rsuper*}

\author[R.~Bonnet]{R\'emi Bonnet\rsuper a}
\address{{\lsuper a}LSV, ENS Cachan \& CNRS, France}
\email{remi.bonnet@cs.ox.ac.uk}

\author[R.~Chadha]{Rohit Chadha\rsuper b}
\address{{\lsuper{b,c,d}}University of Missouri, U.S.A.}
\email{chadhar@missouri.edu, madhu@illiniois.edu, vmahesh@illinois.edu}

\author[P.~Madhusudan]{P.~Madhusudan\rsuper c}
\address{\vspace{-6 pt}}

\author[M.~Viswanathan]{Mahesh Viswanathan\rsuper d}
\address{\vspace{-6 pt}}

\ACMCCS{[{\bf Theory of computation}]: Semantics and
  reasoning---Program reasoning---Program verification; [{\bf Software
      and its engineering}]: Software organization and
  properties---Software functional properties---Formal
  methods---Software verification}

\subjclass{F.3.1, D.2.4}
\keywords{Static analysis, Pushdown reachability, Locks, Reentrant locks} 
\titlecomment{{\lsuper*}An extended abstract of the paper appeared in
\cite{rch:pm:mv:12}.}

\maketitle

\begin{abstract}
The pairwise reachability problem for a multi-threaded program asks,
given control locations in two threads, whether they can be simultaneously reached in
an execution of the program. The problem is important for static
analysis and is used to detect statements that are concurrently enabled. This
problem is in general undecidable even when data is abstracted and
when the threads (with recursion) synchronize only using a finite set of locks. 
Popular programming paradigms that limit the lock usage patterns have been identified
under which the pairwise reachability problem becomes decidable. In
this paper, we consider a new natural programming paradigm, called contextual locking,
which ties the lock usage to calling patterns in each thread: we assume that locks are released in the same context that they were acquired and that every lock acquired by a thread in
a procedure call is released before the procedure returns. Our
main result is that the pairwise reachability problem is polynomial-time 
decidable for this new programming paradigm as well. The problem becomes undecidable if the locks are
reentrant; reentrant locking is a \emph{recursive locking} mechanism which allows a thread in a multi-threaded program 
to acquire the reentrant lock multiple times.  \end{abstract}
 \section{Introduction}
In static analysis of sequential programs~\cite{muchnick}, such as
control-flow analysis, data-flow analysis, points-to analysis, etc., 
the semantics of the program and the data that it manipulates is \emph{abstracted}, and the
analysis concentrates on computing fixed-points over a lattice using
the control-flow in the program. For instance, in flow-sensitive
context-sensitive points-to analysis, a finite partition of the heap
locations is identified, and the analysis keeps track of the set of
possibilities of which variables point may point to each heap-location
partition, propagating this information using the control-flow graph
of the program. In fact, several static analysis questions can be
formulated as reachability in a pushdown system that captures the
control-flow of the program (where the stack is required to model
recursion)~\cite{HorowitzRepsSagiv}.

In concurrent programs, abstracting control-flow is less obvious, due to the
various synchronization mechanisms used by threads to communicate and orchestrate
their computations.
One of the most basic questions is \emph{pairwise reachability}: given
two control locations $pc_1$ and $pc_2$ in two threads of a concurrent
program, are these two locations simultaneously reachable? This
problem is very basic to static analysis, as many analysis techniques
would, when processing $pc_1$, take into account the \emph{interference} of
concurrent statements, and hence would like to know if a location like
$pc_2$ is concurrently reachable. Data-races can also be formulated
using pairwise reachability, as it amounts to asking whether a
read/write to a location (or an abstract heap region) is concurrently
reachable with a write to the same location (or region). More
sophisticated verification techniques like deductive verification can
also utilize such an analysis. For instance, in an Owicki-Gries style
proof~\cite{og76} of a concurrent program, the invariant at $pc_1$
must be proved to be stable with respect to concurrent moves by the
environment, and hence knowing whether $pc_2$ is concurrently
reachable will help determine whether the statement at $pc_2$ need be
considered for stability.

Pairwise reachability of control locations is hence an important
problem. Given that individual threads may employ recursion, this
problem can be \emph{modeled} as reachability of \emph{multiple}
pushdown systems that synchronize using the synchronization constructs
in the concurrent program, such as locks, barriers, etc. However, it
turns out that even when synchronization is limited to using just
locks, pairwise reachability is
\emph{undecidable}~\cite{ram00}.   Consequently, recently, many natural
restrictions have been identified under which pairwise reachability is
decidable.

One restriction that yields a decidable pairwise reachability problem
is \emph{nested locking}~\cite{kig05,kg06}: if each thread performs
only nested locking (i.e. locks are released strictly in the reverse order
in which they are acquired), then pairwise reachability is known to be
decidable~\cite{kig05}.  The motivation for nested locking is that many
high-level locking constructs in programming languages naturally
impose nested locking. For instance the {\tt synchronize(o) \{\ldots
  \}} statement in Java acquires the lock associated with $o$, executes the
body, and releases the lock, and hence nested synchronized blocks
naturally model nested locking behaviors. The usefulness of the pairwise 
reachability problem was demonstrated in~\cite{kig05} where the above  decision procedure for
nested locking was used to find bugs in the Daisy file system.
 Nested locking has been
generalized to the paradigm of \emph{bounded lock chaining} for which
pairwise reachability has also been proved to be
decidable~\cite{kah09,kah11}. 

In this paper, we study a different restriction on locking, called
\emph{contextual locking}. A program satisfies contextual locking if
each thread, in every context, acquires new locks and releases all
these locks before returning from the context. Within the context,
there is \emph{no requirement} of how locks are acquired and released;
in particular, the program can acquire and release locks in a
non-nested fashion or have unbounded lock chains.

The motivation for contextual locking comes from the fact that this is
a very natural restriction. First, note that it's very natural for
programmers to release locks in the same context they were acquired;
this makes the acquire and release occur in the same syntactic code block,
which is a very simple way of managing lock acquisitions.

Secondly, contextual locking is very much encouraged by higher-level
locking constructs in programming languages.  For example, consider
the code fragment of a method, in Java~\cite{lea} shown in Figure
\ref{fig:javaexxample}.
\begin{figure}
\begin{verbatim}
   public void m() {
     synchronized(done) {
       ...
       synchronized(r) {
         ...
         while (done=0)
         try {
            done.wait();
         }
     ...
   }
\end{verbatim}
\caption{Synchronized blocks in Java}
\label{fig:javaexxample}
\end{figure}
The above code takes the lock associated with $\textit{done}$ followed
later by a lock associated with object $r$.  In order to proceed, it
wants $done$ to be equal to $1$ (a signal from a concurrent thread,
say, that it has finished some activity), and hence the thread waits
on $\textit{done}$, which releases the lock for $\textit{done}$,
allowing other threads to proceed. When some other thread issues a
\emph{notify}, this thread wakes up, reacquires the lock for $\textit{done}$,
and proceeds.

Notice that despite having synchronized blocks, the {\texttt wait()}
statement causes releases of locks in a non-nested fashion (as it
exhibits the sequence \emph{acq lock\_done; acq lock\_r; rel
  lock\_done; acq lock\_done; rel lock\_r; rel lock\_done}).
However, note that the code above does satisfy \emph{contextual
  locking}; the lock $m$ acquires are all released before the exit,
because of the synchronized-statements. Thus, we believe that
contextual locking is a natural restriction that is adhered to
in many programs.

The first result of this paper is that pairwise reachability is
decidable under the restriction of contextual locking. It is worth
pointing out that this result does not follow from the decidability
results for nested locking or bounded lock
chains~\cite{kig05,kah09}. Unlike nested locking and bounded lock
chains, contextual locking imposes no restrictions on the locking
patterns in the absence of recursive function calls; thus, programs
with contextual locking may not adhere to the nested locking or
bounded lock chains restrictions.  Second, the decidability of nested
locking and bounded lock chains relies on a non-trivial observation
that the number of context switches needed to reach a pair of states
is bounded by a value that is \emph{independent} of the size of the
programs.\footnote{This observation is implicit in the proofs of decidability in~ \cite{kig05,kah09}.} However, such a result of a bounded number of context
switches does not seem to hold for programs with contextual locking. Thus, the
proof techniques used to establish decidability are different as well.

We give a brief outline of the proof ideas
behind our decidability result. We observe that if a pair of states is
simultaneously reachable by some execution, then they are also
simultaneously reachable by what we call a \emph{well-bracketed
  computation}. A concurrent computation of two threads is not well-bracketed, if in the computation one process, say $\cP_0$, makes a
call which is followed by the other process ($\cP_1$) making a call,
but then $\cP_0$ returns from its call before $\cP_1$ does (but after
$\cP_1$ makes the call). We then observe that every well-bracketed
computation of a pair of recursive programs can simulated by a single
recursive program. Thus, decidability in polynomial time follows from
observations about reachability in pushdown systems~\cite{bem97}.

The second result of the paper concerns  \emph{reentrant  locks}. The standard mutex locks are blocking, \emph{i.e.}, if a
lock is held by some thread, then any attempt to acquire it by any thread (including the \emph{owning} thread) fails and the requesting thread blocks. 
Some programming languages such as Java support  (non-blocking)  \emph{reentrant  locks}.
Reentrant locks are recursive locks;  if a thread attempts to acquire a reentrant lock it already holds 
then the thread succeeds.  The lock becomes free only when the owning thread releases the lock as many times as
it acquires the lock.

We  consider the case of multi-threaded programs synchronizing through \emph{reentrant} locks and show
 that the pairwise reachability problem is undecidable for contextual reentrant locking (even for non-recursive programs). 
The undecidability result is obtained by a reduction from the emptiness problem of a $2$-counter machine. 
As the locking mechanism is reentrant,  a counter can be simulated by a lock. Counter increment and counter decrement can be
simulated by lock acquisition and lock release respectively. The ``zero-test'' in the reduction is simulated by 
synchronization between the  threads.

The rest of the paper is organized as follows. Section~\ref{sec:model}
introduces the model of concurrent pushdown systems communicating
via locks and presents its semantics. Our main decidability result
is presented in Section~\ref{sec:main}. The undecidability result for contextual reentrant locking is presented in Section~\ref{sec:reentrant}. Conclusions are presented in
Section~\ref{sec:conclusions}.

{\bf Note:} An extended abstract of the paper  co-authored by Rohit Chadha, P. Madhusudan and Mahesh Viswanathan appeared in
\cite{rch:pm:mv:12} and contains the decidability
 result for pairwise reachability under contextual (non-reentrant) locking.
The undecidability result for contextual reentrant locking was obtained subsequently by R\'emi Bonnet and Rohit Chadha,
and has not been presented elsewhere.

\section{Model}
\label{sec:model}
The set of natural numbers shall be denoted by $\Nats.$ The set of functions from a set $A$ to $B$ shall be denoted by $B^A.$ Given a function $f\in B^A$, the function $f |_{a\mapsto b}$ shall be the unique function $g$ defined as follows:
$g(a)=b$ and for all $a'\ne a$, $g(a')= f(a').$ If $\bar a = (a_1,\ldots,a_n)\in A_1\times\cdots\times A_n$ then $\pi_i(a)=a_i$ for each $1\leq i\leq n.$

\subsection{Pushdown Systems.} For static analysis,  recursive programs are usually modeled as pushdown systems.
Since we are interested in modeling threads in concurrent programs we will also need to model locks for communication between threads.
Formally,

\begin{definition}
Given a finite set $\Locks$, a pushdown system (PDS) $\cP$ using $\Locks$
is a tuple $(\Q,\Gamma,qs,\delta)$ where

\begin{iteMize}{$\bullet$}
\item $\Q$ is a  finite set of control states.
\item $\Gamma$ is a finite stack alphabet.
\item $qs$ is the initial state.
\item $\delta=\delta_\state \union \delta_\push \union \delta_\pop \union \delta_\acq \union \delta_\rel$ is a finite set of
   transitions where
   \begin{iteMize}{$-$}
   \item $\delta_\state \subseteq \Q  \times \Q $.
   \item $\delta_\push \subseteq \Q  \times (\Q \times \Gamma)$.
   \item $\delta_\pop \subseteq (\Q \times \Gamma)\times \Q$.
   \item $\delta_\acq\subseteq \Q  \times (\Q\times \Locks)$.
   \item $\delta_\rel \subseteq  (\Q \times \Locks)\times \Q  $.

   \end{iteMize}
 \end{iteMize}
 \end{definition}
For a PDS $\cP$, the semantics is defined as a transition system.
The configuration of a PDS $\cP$ is the product of the set of control states $\Q$ and the stack which is modeled
as a word over the stack alphabet $\Gamma.$  For a thread $\cP$ using $\Locks,$ we have to keep track of the locks being held by $\cP.$
 Thus
  the set of configurations  of  $\cP$ using $\Locks$ is  $\Conf_\cP= \Q\times \Gamma^*\times 2^\Locks$
 where  $2^\Locks$ is the powerset of $\Locks.$

 Furthermore, the transition relation is no longer just
 a relation between configurations but  a binary relation on $2^\Locks\times \Conf_\cP $ since
 the thread now {\it executes} in an {\it environment}, namely, the free locks (i.e., locks not being held by any other thread).
 Formally,

 \begin{definition}
 A PDS $\cP=(\Q,\Gamma,\qs,\delta)$ using $\Locks$ gives  a labeled
 transition relation $\longrightarrow_\cP \subseteq (2^\Locks \times (\Q\times \Gamma^* \times 2^\Locks))\times\Labels\times (2^\Locks \times(\Q\times \Gamma^* \times 2^\Locks))$ where $\Labels=\{\state,\push,\pop\}\union\{\acq(l),
 \rel(l)\ |\ l \in \Locks\}$ and
 $\longrightarrow_\cP$ is defined as follows.
 \begin{iteMize}{$\bullet$}
 \item $\free:(q,w,\held)\  {\stackrel{\state}{\longrightarrow}}_\cP\ \free:(q',w,\held)$ if $(q,q')\in \delta_\state.$
 \item $\free:(q,w,\held) \stackrel{\push}{\longrightarrow}_\cP \free:(q',wa,\held)$ if $(q,(q',a))\in \delta_\push.$
 \item $\free:(q,wa,\held) {\stackrel{\pop}\longrightarrow}_\cP\ \free:(q',w,\held)$ if $((q,a),q')\in \delta_\pop.$
 \item $\free:(q,w,\held)\ {\stackrel{\acq(l)}{\longrightarrow}}_\cP\ \free\setminus{\{l\}}:(q',w,\held\union \{l\})$ if $(q,(q',l))\in \delta_\acq$ and $l\in \free$.
 \item $\free:(q,w,\held)\ {\stackrel{\rel(l)}\longrightarrow}_\cP\ \free\union{\{l\}}:(q',w,\held\setminus \{l\})$ if  $((q,l),q')\in \delta_\rel$ and
 $l\in \held$.
 \end{iteMize}
 \end{definition}

\subsection{Multi-pushdown systems} Concurrent programs are modeled as multi-pushdown systems.
For our paper, we assume that  threads in a concurrent program communicate only through locks which leads
us to the following definition.
\begin{definition}
Given a finite set $\Locks$, a $n$-pushdown system ($n$-PDS) $\cCP$ communicating via $\Locks$
is a tuple $(\cP_1,\ldots, \cP_n)$ where each $\cP_i$ is a PDS using $\Locks.$
\end{definition}

Given a $n$-PDS $\cCP$,
we will assume that the set of control states and the stack symbols of the threads are mutually disjoint.
\begin{definition}
The semantics of a $n$-PDS $\cCP=(\cP_1,\ldots, \cP_n)$ communicating via $\Locks$ is given as a labeled transition system $T=(S,s_0,\longrightarrow)$ where
\begin{iteMize}{$\bullet$}
\item $S$ is said to be the set of configurations of $\cCP$ and  is the set $(\Q_1\times \Gamma_1^* \times 2^\Locks)\times \cdots \times(\Q_n\times \Gamma_n^* \times 2^\Locks),$ where $\Q_i$ is the set of states of $\cP_i$ and $\Gamma_i$ is the stack alphabet of $\cP_i.$
\item $s_0$ is the initial configuration and is $((qs_1,\epsilon, \emptyset),\cdots, (qs_m,\epsilon,\emptyset))$ where $qs_i$ is the initial state of $\cP_i.$

\item The set of labels on the transitions  is $\Labels \times \{1,\ldots, n\}$ where $\Labels = \{\state,\push,\pop\}\union\{
\acq(l),\rel(l)\ |\ l \in \Locks\}.$
The labeled transition relation $\stackrel{(\lambda,i)}{\longrightarrow}$ is defined as follows
 $$ ((q_1,w_1, \held_1),\cdots (q_n,w_n,\held_n))\stackrel{(\lambda,i)}{\longrightarrow}  ((q'_1,w'_1, \held'_1),\cdots (q'_n,w'_n,\held'_n))$$ iff
      $$ \Locks \setminus \union_{1\leq r\leq n}{\held_r}: (q_i,w_i,\held_i) \stackrel{\lambda}\longrightarrow_{\cP_i} \Locks \setminus \union_{1\leq r\leq n}{\held'_r}:(q_i',w_i',\held_i')$$
      and for all $j\ne i$,
      $q_j=q'_j,$ $w_j=w_j'$ and $\held_j=\held_j'$.
\end{iteMize}
\end{definition}

\begin{notation}

Given a configuration $s= ((q_1,w_1, \held_1),\cdots, (q_n,w_n,\held_n))$ of a $n$-PDS $\cCP$,
we say that $\Conf_i(s)=(q_i,w_i,\held_i), \ControlSt_i(s)=q_i, \Stack_i(s)= w_i, \LockSet_i(s)=\held_i$
and $\StackHeight_i(s)=|w_i|$, the length of $w_i.$
\end{notation}

\subsection{Computations.}
 A {\it computation} of  the $n$-PDS $\cCP$, $\Comp$, is a sequence $s_0\stackrel{(\lambda_1,i_1)}{\longrightarrow}\cdots \stackrel{(\lambda_m,i_m)}{\longrightarrow}s_m$
 such that $s_0$ is the initial configuration of
$\cCP.$ The {\it label of the computation} {\Comp}, denoted {$\Label(\Comp)$}, is said to be the word $(\lambda_1,i_1)\cdots(\lambda_m,i_m).$
The
transition $s_j\stackrel{(\push,i)}\longrightarrow s_{j+1}$ is said to be a {\it procedure call by thread $i$}. Similarly, we can define {\it procedure return}, {\it internal action},
{\it acquisition of lock $l$} and {\it release of lock $l$} by thread $i.$
A procedure return $s_{j}\stackrel{(\pop,i)}\longrightarrow s_{j+1}$   is said to
{\it match} a procedure call $s_{\ell}\stackrel{(\push,i)}\longrightarrow s_{\ell+1}$ iff $\ell<j,$
$\StackHeight_i(s_\ell)=\StackHeight_i(s_{j+1})$ and for all $\ell+1 \leq p\leq j,$
$\StackHeight_i(s_{\ell+1})\leq \StackHeight_i(s_p).$




\begin{example}
\label{exam:example1} Consider the two-threaded program showed in
Figure \ref{fig:example1}. For sake of convenience, we only show the relevant actions of the programs.
Figure \ref{fig:compexam} shows  computations whose labels are as follows:
$$\begin{array}{ll}
\Label(\Comp1)=& (\push,0) (\acq(\textrm{l1}),0) (\push,1) (\acq(\textrm{l2}),0) (\rel(\textrm{l1}),0)
        (\acq(\textrm{l1}),1)\\
      & \hspace*{4cm}(\rel(\textrm{l2}),0) (\pop,0) (\rel(\textrm{l1}),1) (\pop,1)
\end{array}$$

and
$$ \begin{array}{ll}
\Label(\Comp2)= &(\push,0) (\acq(\textrm{l1}),0) (\push,1) (\acq(\textrm{l2}),0) (\rel(\textrm{l1}),0)
        (\acq(\textrm{l1}),1)\\
        &\hspace*{4cm} (\rel(\textrm{l1}),1) (\pop,1)(\rel(\textrm{l2}),0) (\pop,0).
\end{array}$$
respectively.
\begin{figure}[ht]
\begin{minipage}[b]{0.5\textwidth}
\begin{verbatim}
   int a(){
      acq l1; 
      acq l2;
      if (..) then{
         ....
         rel l2;
         ....
         rel l1;
          };
      else{
          .....
          rel l1
          .....
          rel l2
          };
         return i;
   };

   public void P0() {
     n=a();
   }
   \end{verbatim}

\end{minipage}
\begin{minipage}[b]{0.5\textwidth}
\begin{verbatim}
  int b(){
      acq l1;
      rel l1;
      return j;
   };

   public void P1() {
     m=b();
   }
   \end{verbatim}

\end{minipage}

\caption{A $2$-threaded program with threads P0 and P1}
\label{fig:example1}
\end{figure}

\begin{figure}
\begin{center}
\begin{tikzpicture}
\input{comp.tkz}
\end{tikzpicture}
\end{center}
\caption{Computations $\Comp 1$ and $\Comp 2$. Transitions of $P0$ are
  shown as solid edges while transition of $P1$ are shown as dashed
  edges; hence the process ids are dropped from the label of
  transitions. Matching calls and returns are shown with dotted
  edges.}
\label{fig:compexam}
\end{figure}

\end{example}

\subsection{Contextual locking}
In this paper, we are considering the pairwise reachability problem when the threads follow  {\it contextual locking}. Informally, this means that--
\begin{iteMize}{$\bullet$}
\item every lock acquired by a thread in a
procedure call must be released before the corresponding return is executed, and
\item the locks held by a thread just before a procedure call is executed are not released during the
execution of the procedure.

\end{iteMize}
Formally,
 \begin{definition}
A thread $i$ in a $n$-PDS $\cCP=(\cP_1,\ldots, \cP_n)$ is said to follow contextual locking if whenever $s_\ell \stackrel {(\push,i)}\longrightarrow s_{\ell+1}$
and $s_{j}\stackrel  {(\pop,i)} \longrightarrow s_{j+1}$ are matching procedure call and return  along a computation
$s_0   \stackrel{(\lambda_1,i)}\longrightarrow s_1  \cdots \stackrel {(\lambda_m,i)}\longrightarrow s_m,$
we have that
$$\LockSet_i(s_\ell)= \LockSet_i(s_j) \mbox{ and } \mbox{ for all } \ell\leq r \leq j.\ \LockSet_i(s_\ell) \subseteq \LockSet_i(s_r).$$

 \end{definition}

\begin{example}
\label{exam:example2}
Consider the $2$-threaded program shown in Figure \ref{fig:example1}. Both the threads P0 and
P1 follow contextual locking. The program P2 in Figure \ref{fig:example2} does not follow contextual locking.
\begin{figure}[ht]
\begin{verbatim}

   int a(){
      acq l1;
      rel l2;
      return i;
   };
   public void P2(){
   acq l2;
   n=a();
   rel l1;
   }
\end{verbatim}

\caption{A program that does not follow contextual locking.}
\label{fig:example2}
\end{figure}

\end{example}
\begin{example}
\label{exam:example3}
Consider the $2$-threaded program in Figure \ref{fig:example3}. The
two threads P3 and P4 follow contextual locking as there is no
recursion! However, the two threads do not follow either the
discipline of nested locking~\cite{kig05} or of bounded lock
chaining~\cite{kah09}. Hence, algorithms of \cite{kig05,kah09} cannot
be used to decide the pairwise reachability question for this
program. Notice that the computations of this pair of threads require
an unbounded number of context switches as the two threads proceed in
lock-step fashion. The locking pattern exhibited by these threads can
present in any program with contextual locking as long as this pattern
is within a single calling context (and not across calling
contexts). Such locking patterns when used in a non-contextual fashion
form the crux of undecidability proofs for multi-threaded programs
synchronizing via locks~\cite{kig05}.

\begin{figure}[ht]
\begin{minipage}{0.5\textwidth}
\begin{verbatim}

   public void P3(){
    acq l1;
    while (true){
       acq l2;
       rel l1;
       acq l3;
       rel l2;
       acq l1;
       rel l3;
    }
    }
   \end{verbatim}

\end{minipage}
\begin{minipage}{0.5\textwidth}
\begin{verbatim}
  public void P4(){
    acq l3;
    while (true){
       acq l1;
       rel l3;
       acq l2;
       rel l1;
       acq l3;
       rel l2;
    }
    }
   \end{verbatim}

\end{minipage}

\caption{A $2$-threaded program with unbounded lock chains}
\label{fig:example3}
\end{figure}

\end{example}


\section{Pairwise reachability}
\label{sec:main}
The pairwise reachability problem for a multi-threaded program
asks whether two given states in two threads can be simultaneously
reached in an execution of the program. Formally,

\begin{definition}
Given a $n$-PDS $\cCP=(\cP_1,\ldots, \cP_n)$ communicating via $\Locks,$   indices $1\leq i,j\leq n$ with $i\ne j$, and control states $q_i$ and
$q_j$ of threads $\cP_i$ and $\cP_j$ respectively, let $Reach(\cCP,q_i,q_j)$ denote the predicate that there is a computation $s_0\longrightarrow \cdots \longrightarrow s_m$ of $\cCP$ such that $\ControlSt_{i}(s_m)=q_i$ and $\ControlSt_{j}(s_m)=q_j.$ The
pairwise control state reachability problem  asks if $Reach(\cCP,q_i,q_j)$  is true.
\end{definition}

The pairwise reachability problem for multi-threaded programs
communicating via locks was first studied in~\cite{ram00}, where it
was shown to be undecidable. Later, Kahlon et. al.~\cite{kig05} showed
that when the locking pattern is restricted the pairwise reachability
problem is decidable.  In this paper, we will show that the problem is
decidable for multi-threaded programs in which each thread follows
contextual locking. Before we show this result, note that it suffices
to consider programs with only two threads \cite{kig05}.
\begin{proposition}
Given a $n$-PDS $\cCP=(\cP_1,\ldots, \cP_n) $ communicating via
$\Locks,$ indices $1\leq i,j\leq n$ with $i\ne j$ and control states
$q_i$ and $q_j$ of $\cP_i$ and $\cP_j$ respectively, let $\cCP_{i,j}$
be the $2$-PDS $(\cP_i,\cP_j)$ communicating via $\Locks.$ Then $Reach(\cCP,q_i,q_j)$ iff
$Reach(\cCP_{i,j},q_i,q_j).$
\end{proposition}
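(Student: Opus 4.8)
The backward direction ($Reach(\cCP_{i,j}, q_i, q_j) \Rightarrow Reach(\cCP, q_i, q_j)$) is essentially trivial: given a witnessing computation of the $2$-PDS $(\cP_i, \cP_j)$, I would lift it to a computation of the full $n$-PDS by letting all threads other than $\cP_i$ and $\cP_j$ remain idle in their initial configurations throughout. Since idle threads hold no locks, the set of free locks available to $\cP_i$ and $\cP_j$ in the $n$-PDS is exactly the same as in the $2$-PDS, so each transition of the witness lifts verbatim. The final configuration still satisfies $\ControlSt_i(s_m) = q_i$ and $\ControlSt_j(s_m) = q_j$.

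The forward direction is the one carrying the content. First I would take a witnessing computation $\Comp = s_0 \longrightarrow \cdots \longrightarrow s_m$ of the full $n$-PDS with $\ControlSt_i(s_m) = q_i$ and $\ControlSt_j(s_m) = q_j$. The key observation is that the transitions of threads other than $\cP_i$ and $\cP_j$ can be \emph{deleted} from $\Comp$ to produce a valid computation of $\cCP_{i,j}$. Concretely, I would define the projected sequence by keeping only those transitions labeled $(\lambda_r, i_r)$ with $i_r \in \{i, j\}$, and projecting each configuration onto its $i$-th and $j$-th components. The obstacle to this being a valid computation is the \emph{lock availability side condition}: an $\acq(l)$ step requires $l$ to lie in the global free set $\Locks \setminus \bigcup_r \held_r$, and deleting other threads' transitions only \emph{enlarges} the free set (since we drop the locks those threads were holding). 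Hence every acquisition that was enabled in $\Comp$ remains enabled after projection, and release steps are unaffected since they depend only on the acting thread's own held set. Thus each surviving transition remains a legal step of $\cCP_{i,j}$, and the projected configurations form a genuine computation reaching the same pair of control states $q_i, q_j$.

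\emph{The main subtlety} to verify carefully is that the free-lock computation matches the formal semantics of the $2$-PDS exactly, not merely that acquisitions stay enabled. In $\cCP_{i,j}$ the environment is $\Locks \setminus (\held_i \cup \held_j)$, whereas in $\cCP$ it was $\Locks \setminus \bigcup_{1 \le r \le n} \held_r$; I would confirm that for the surviving transitions the acting thread's local pre- and post-configurations $(q, w, \held)$ are identical in both systems, so the only change is a uniformly larger environment, which never disables an acquisition and never affects a release. One clean way to organize this is by induction on the length of the projected computation, maintaining the invariant that at each retained step the $i,j$-components agree between the two runs. Notably, this argument does not even require the contextual-locking hypothesis, which is why the proposition is stated for arbitrary $n$-PDSs and serves merely to reduce the main decidability theorem to the two-thread case.
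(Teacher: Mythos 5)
Your proof is correct. The paper itself gives no argument for this proposition---it simply cites Kahlon et al.~\cite{kig05}---and what you have written is exactly the standard projection/lifting argument that justifies the reduction: deleting the other threads' moves can only enlarge the free-lock set $\Locks \setminus \bigcup_r \held_r$, so every retained $\acq$ step stays enabled, while $\rel$, $\state$, $\push$, $\pop$ steps are insensitive to the environment; conversely, idle threads hold no locks, so a $2$-PDS run lifts verbatim. Your closing remark that the argument needs no contextual-locking hypothesis is also right, and matches how the paper uses the proposition (it is stated for arbitrary $n$-PDSs before the locking discipline enters the picture).
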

Thus, for the rest of the section, we will only consider $2$-PDS.

\subsection{Well-bracketed computations}
The key concept in the proof of decidability is
the concept of well-bracketed computations, defined below.

\begin{definition}
Let $\cCP=(\cP_0,\cP_1)$ be a $2$-PDS via $\Locks$ and let $\Comp=s_0\stackrel{(\lambda_1,i_1)}{\longrightarrow}\cdots \stackrel{(\lambda_m,i_m)}{\longrightarrow}s_m$ be a computation of $\cCP$. $\Comp$ is said to be  \emph{non-well-bracketed} if there exist $0\leq \ell_1<\ell_2<\ell_3<m$ and $i\in \{0,1\}$ such that
\begin{iteMize}{$\bullet$}
\item $s_{\ell_1}\stackrel{(\push,i)}{\longrightarrow}s_{\ell_1+1}$  and $s_{\ell_3}\stackrel{(\pop,i)}{\longrightarrow}s_{\ell_3+1}$ are matching call and returns of
$\cP_i$, and
\item $s_{\ell_2}\stackrel{(\push,i)}{\longrightarrow}s_{\ell_2+1}$  is a procedure call of thread $\cP_{1-i}$ whose matching return either occurs after $\ell_3+1$ or does not occur at all.
\end{iteMize}
Furthermore,
the triple $(\ell_1,\ell_2,\ell_3)$ is said to be a {\it witness} of non-well-bracketing of $\Comp$.

$\Comp$ is said to be \emph{well-bracketed} if it is not non-well-bracketed.
\end{definition}
\begin{example}
\label{exam:example4}
Recall the $2$-threaded program from Example~\ref{exam:example1} shown in Figure~\ref{fig:example1}.
The computation $\Comp1$ (see Figure~\ref{fig:compexam}) is non-well-bracketed, while the computation $\Comp2$
(see Figure~\ref{fig:compexam}) is well-bracketed. On the other hand, all the computations of the $2$-threaded program
in  Example~\ref{exam:example3} (see Figure~\ref{fig:example3}) are well-bracketed as the two threads are
non-recursive.
\end{example}
 If there is a computation that simultaneously reaches
control states $p\in \cP_0$ and $q\in \cP_1$   then there is a well-bracketed computation that simultaneously reaches
$p$ and $q$:

\begin{lemma}
Let
 $\cCP=(\cP_0,\cP_1)$ be a $2$-PDS communicating via $\Locks$ such that each thread follows contextual locking. Given control states $p\in \cP_0$ and $ q\in \cP_1$, we have that
$Reach(\cCP,p,q)$ iff there is a  well-bracketed computation $s^{wb}_0 \longrightarrow \cdots \longrightarrow s^{wb}_r$ of $\cCP$
such that $\ControlSt_0(s^{wb}_r)=p$ and $\ControlSt_1(s^{wb}_r)=q.$

\end{lemma}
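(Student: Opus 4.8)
The plan is to prove the two directions separately. The right-to-left direction is immediate, since a well-bracketed computation is in particular a computation, so if one reaches $p$ and $q$ then $Reach(\cCP,p,q)$ holds. All the content lies in the forward direction: I would start from an arbitrary computation $\Comp=s_0\longrightarrow\cdots\longrightarrow s_m$ with $\ControlSt_0(s_m)=p$ and $\ControlSt_1(s_m)=q$ and reschedule it into a well-bracketed one reaching the same pair of control states. The key observation driving the rescheduling is that the pair of control states reached depends only on the two \emph{local} action sequences $\sigma_0$ and $\sigma_1$ that $\cP_0$ and $\cP_1$ respectively perform along $\Comp$: any interleaving of $\sigma_0$ and $\sigma_1$ that is a legal computation (i.e.\ one that respects the mutual-exclusion constraint on locks) leaves each thread in the same final control state. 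Hence it suffices to reorder the transitions of $\Comp$, keeping each thread's own order of actions fixed, so as to obtain a well-bracketed legal interleaving.

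I would run the argument as an induction on the number of \emph{crossing pairs} of $\Comp$: a crossing pair consists of a matching call/return $(\ell_1,\ell_3)$ of one thread together with a call of the other thread at some $\ell_2$ with $\ell_1<\ell_2<\ell_3$ whose matching return is after $\ell_3$ (or does not occur). By definition, a computation has no crossing pair precisely when it is well-bracketed. For the inductive step, given such a crossing with inner thread $i$ (call at $\ell_1$, matching return at $\ell_3$) and straddling thread $1-i$ (call at $\ell_2$), I would reschedule only the segment producing $s_{\ell_2}\rightsquigarrow s_{\ell_3+1}$, leaving the prefix $s_0\longrightarrow\cdots\longrightarrow s_{\ell_2}$ and the suffix $s_{\ell_3+1}\longrightarrow\cdots\longrightarrow s_m$ untouched: first let thread $i$ execute its remaining actions up to and including the return at $\ell_3$ (completing the straddled call \emph{atomically}), and then let thread $1-i$ perform exactly the actions it had originally taken between $\ell_2$ and $\ell_3$ (its straddling call had only just begun). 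I would then check that the reordered segment ends in exactly the configuration $s_{\ell_3+1}$: both threads perform the same local actions as before, so their control states and stacks agree, and the lock sets agree because $\LockSet_i(s_{\ell_3})=\LockSet_i(s_{\ell_1})=\LockSet_i(s_{\ell_3+1})$ by contextual locking while thread $1-i$ is unaffected by thread $i$'s return. Thus the suffix can be appended verbatim and the reached control states are still $p$ and $q$.

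The heart of the proof, and the step I expect to be the main obstacle, is verifying that this reordered segment is a \emph{legal} computation, i.e.\ that every lock acquisition in it still finds its lock free; this is exactly where contextual locking is indispensable. When thread $i$ runs its block first, thread $1-i$ stays frozen holding $\LockSet_{1-i}(s_{\ell_2})$; any lock that thread $i$ acquired inside $(\ell_2,\ell_3)$ was free at that moment in $\Comp$, and since thread $1-i$ was then inside the call it began at $\ell_2$ (whose matching return lies beyond $\ell_3$), contextual locking guarantees that thread $1-i$'s lock set at that moment includes $\LockSet_{1-i}(s_{\ell_2})$; hence the acquired lock lies outside $\LockSet_{1-i}(s_{\ell_2})$ and the acquisition remains legal. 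Symmetrically, when thread $1-i$ then runs its block, thread $i$ is frozen holding $\LockSet_i(s_{\ell_1})$; any lock that thread $1-i$ acquired in $(\ell_2,\ell_3)$ was free at a moment when thread $i$, inside the call $(\ell_1,\ell_3)$, held at least the locks of $\LockSet_i(s_{\ell_1})$, so that lock lies outside $\LockSet_i(s_{\ell_1})$ and this acquisition is legal as well. Both directions use the two monotonicity facts supplied by contextual locking, namely that a thread's lock set only grows during a call relative to the value at the call site and returns to that value at the matching return.

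Finally I would argue that the transformation strictly decreases the measure: the chosen crossing pair is removed, because thread $i$'s call now lies entirely before, and disjoint from, thread $1-i$'s call. The secondary delicate point is to confirm that no \emph{new} crossing is created. Within thread $i$'s block and within thread $1-i$'s block, calls are nested (single-thread nesting), and thread $i$'s block entirely precedes thread $1-i$'s block; any call of thread $1-i$ that straddles into the suffix keeps the same nested-or-disjoint relationship to each suffix call of thread $i$ as in $\Comp$, since the suffix and the configuration $s_{\ell_3+1}$ are unchanged. Hence the number of crossing pairs drops by at least one, and the induction terminates at a well-bracketed computation reaching $p$ and $q$, as required.
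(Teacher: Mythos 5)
Your proposal is correct, and its core transformation is exactly the paper's: both take the violating segment, run the inner thread's pending actions through its matching return while the other thread stays frozen, then run the frozen thread's actions, and both justify legality of the re-scheduled lock acquisitions by the same two consequences of contextual locking (a thread's lock set inside a call contains its lock set at the call site, and equals it again at the return). The genuine difference is the induction bookkeeping. The paper chooses the witness with $\ell_1=\ell_\mn$ minimal and then $\ell_2=\ell_\sm$ minimal, and shows that after one surgery every remaining witness has first component strictly greater than $\ell_\mn$; minimality is precisely what makes that check easy (it yields the paper's observations $\dagger$ and $\dagger\dagger$). You instead pick an \emph{arbitrary} crossing and take the number of crossing pairs as the measure, which spares you all minimality assumptions but shifts the whole burden onto the claim that the surgery creates no new crossing. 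That claim is in fact true, but your justification of it is the thin spot of the proposal: you only discuss the two relocated blocks and the suffix, while the delicate cases are call/return pairs spanning the boundary of the re-scheduled segment --- for instance, a call of the inner thread made \emph{before} $\ell_2$ whose matching return lies inside the segment (and hence moves earlier), together with an older, still-open call of the straddling thread; such pairs remain crossings after the surgery, and one must check that they were already crossings before (they were, by the stack discipline: a call of the straddling thread still open at $\ell_2$ cannot return before the call at $\ell_2$ does, hence cannot return inside the segment), so nothing new is counted. With that positional case analysis completed, your measure strictly decreases and the induction closes; the trade is that the paper's minimal-witness choice buys a short ``no earlier witness'' verification at the cost of minimality bookkeeping, while your choice is more flexible but needs the fuller enumeration of boundary cases. (One caveat you share with the paper: both arguments invoke lock-set monotonicity for the straddling call even when its matching return never occurs, which the formal definition of contextual locking, stated only for matched call/return pairs, does not literally cover.)
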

\begin{proof}
Let $\Comp_{nwb}=s_0\stackrel{(\lambda_1,i_1)}{\longrightarrow}\cdots \stackrel{(\lambda_m,i_m)}{\longrightarrow}s_m$ be a non-well-bracketed computation that simultaneously reaches $p$ and $q$.
Let $\ell_\mn$ be the smallest $\ell_1$ such that there is a witness $(\ell_1,\ell_2,\ell_3)$ of non-well-bracketing of $\Comp_{nwb}$. Observe now that it suffices to show that there is another computation $\Comp_{mod}$ of the same length as
$\Comp_{nwb}$ that simultaneously reaches $p$ and $q$ and
\begin{iteMize}{$\bullet$}
\item either $\Comp_{mod}$ is well-bracketed,
\item or if $\Comp_{mod}$ is non-well-bracketed then for each witness $(\ell_1',\ell_2',\ell_3')$ of
  non-well-bracketing of $\Comp_{mod},$ it must be the case $\ell_1'>\ell_\mn.$
\end{iteMize}

\noindent We show how to construct $\Comp_{mod}.$ Observe first that any witness $(\ell_\mn, \ell_2,\ell_3)$ of non-well-bracketing of $\Comp_{nwb}$ must necessarily agree in the third component $\ell_3.$ Let $\ell_\rt$
denote this component.  Let $\ell_{\sm}$ be the smallest
$\ell_2$ such that $(\ell_\mn, \ell_2,\ell_\rt)$ is a witness of non-well-bracketing of $\Comp_{mod}.$
Thus, the transition $s_{\ell_\mn}\longrightarrow s_{\ell_\mn+1}$ and $s_{\ell_\rt}\longrightarrow s_{\ell_\rt+1}$
are matching procedure call and return of some thread $\cP_r$ while the transition $s_{\ell_\sm}\longrightarrow s_{\ell_\sm+1}$ is a procedure call of $c'$ by thread $\cP_{1-r}$ whose return happens only after $\ell_\rt.$ Without loss of generality, we can assume that $r=0.$

Let $u,(\push,0),v_1,(\push,1),v_2,(\pop,0)$ and $w$ be such that
$\Label(\Comp_{nwb})= u (\push,0) v_1(\push,1)$ $v_2(\pop,0)w$.  and
length of $u$ is $\ell_\mn$, of $u(\push,0)v_1$ is $\ell_\sm$ and
of $u(\push,0)v_1(\push,1)v_2$ is $\ell_\rt.$ Thus, $(\push,0)$ and
$(\pop,0)$ are matching call and return of thread $\cP_0$ and
$(\push,1)$ is a call of the thread $\cP_1$ whose return does not
happen in $v_2$.

We construct $\Comp_{mod}$ as follows. Intuitively, $\Comp_{mod}$ is
obtained by ``rearranging" the sequence
$\Label(\Comp_{nwb})=u(\push,0)v_1(\push,1)v_2(\pop,0)w$ as
follows. Let $v_2|0$ and $v_2|1$ denote all the ``actions" of thread
$\cP_0$ and $\cP_1$ respectively in $v_2.$ Then $\Comp_{mod}$ is
obtained by rearranging $u(\push,0)v_1(\push,1)v_2(\pop,0)w$ to
$u(\push,0)v_1(v_2|0)(\pop,0)(\push,1)(v_2|1)w.$ This is shown in
Figure \ref{fig:proof}.
\begin{figure}
\begin{center}
\begin{tikzpicture}
\input{proof-comp.tkz}
\end{tikzpicture}
\end{center}
\caption{Computations $\Comp_{nwb}$ and $\Comp_{mod}$. Transitions of
  $\cP_0$ are shown as solid edges and transitions of $\cP_1$ are
  shown as dashed edges; hence process ids are dropped from the label
  of transitions. Matching calls and returns are shown with dotted
  edges. Observe that all calls of $\cP_1$ in $v_1$ have matching
  returns within $v_1$.}
\label{fig:proof}
\end{figure}

The fact that if $\Comp_{mod}$ is non-well-bracketed then there is no
witness $(\ell'_1,\ell'_2,\ell'_3)$ of non-well-bracketing with
$\ell'_1\leq \ell_\mn$ will follow from the following observations on
$\Label(\Comp_{nwb}).$

\begin{iteMize}{$\dagger\dagger$}
\item[$\dagger$] $v_1$ cannot contain any returns of $\cP_1$ which
  have a matching call that occurs in $u$ (by construction of
  $\ell_\mn$).
\item[$\dagger\dagger$] All calls of $\cP_1$ in $v_1$ must return
  either in $v_1$ or after $c'$ is returned. But the latter is not
  possible (by construction of $\ell_\sm$). Thus, all calls of $\cP_1$
  in $v_1$ must return in $v_1.$
\end{iteMize}

Formally, $\Comp_{mod}$ is constructed as follows. We fix some
notation.  For each $0\leq j\leq m,$ let $\Conf^j_0=\Conf_0(s_j)$ and
$\Conf^j_1=\Conf_1(s_j).$ Thus $s_j=(\Conf^j_0,\Conf^j_1).$
\begin{enumerate}[(1)]

\item The first ${\ell_\sm}$ transitions of $\Comp_{mod}$ are the
  same as $\Comp_{nwb}$, i.e., initially
  $\Comp_{mod}=s_0\longrightarrow \cdots \longrightarrow
  s_{{\ell_\sm}}.$
  
\item Consider the sequence of transitions
  $s_{{\ell_{\sm}}}\stackrel{(\lambda_{\sm+1},i_{\sm+1})}\longrightarrow
  \cdots \stackrel{(\lambda_{\rt}+1,i_{\rt}+1)}\longrightarrow
  s_{\ell_{\rt+1}}$ in $\Comp$. Let $k$ be the number of transitions of $\cP_0$
  in this sequence and let ${{\ell_\sm}} \leq j_1< \cdots< j_k \leq
  \ell_{\rt}$ be the indices such that
  $s_{j_n}\stackrel{(\lambda_{j_n+1},0)}\longrightarrow
  s_{j_{n}+1}$. Note that it must be the case that for each $1\leq n< k,$
 $$\Conf^{\ell_{\sm} }_{0}=\Conf_0^{j_1},\  \Conf^{j_n+1}_0=\Conf^{j_{n+1}}_0 \mbox{ and } \Conf^{j_k+1}_0=\Conf^{\rt+1}_0.$$

 For $1\leq n\leq k$,  let $$s'_{{\ell_\sm+n}}=(\Conf_0^{j_{n}+1},\Conf_1^{\ell_\sm}).$$
 Observe now that, thanks to contextual locking, the set of locks held by $\cP_1$ in $\Conf_1^{\ell_\sm}$ is a subset of the locks held
 by $\cP_{1}$ in $\Conf_1^{\ell_{j_n}}$ for each $1\leq n\leq k.$ Thus we can extend $\Comp_{mod}$
 by applying the $k$  transitions of $\cP_0$  used to obtain $s_{j_n}\longrightarrow s_{j_n+1}$ in $\Comp_{nwb}$.
 In other words, $\Comp_{mod} $ is now
 $$s_0\longrightarrow \cdots \longrightarrow s_{{\ell_\sm}}\stackrel{(\lambda_{j_1+1},0)}{\longrightarrow}s'_{{\ell_\sm+1}}\cdots\stackrel{(\lambda_{j_k+1},0)}{\longrightarrow}s'_{\ell_\sm+k}.$$

     Note that $s'_{\ell_\sm+k}=(\Conf_0^{\rt+1},\Conf_1^{\ell_\sm}).$
     Thus the set of locks held by $\cP_0$ in  $s'_{{\ell_\sm}+k}$ is exactly the set of locks held by $\cP_0$ at
        $\Conf_0^{\ell_\mn}.$

     \item
     Consider the sequence of transitions $s_{{\ell_{\sm}}}\stackrel{(\lambda_{\sm+1},i_{\sm+1})}\longrightarrow \cdots \stackrel{(\lambda_{\rt}+1,i_{\rt}+1)}\longrightarrow s_{\ell_{\rt+1}}$ in $\Comp$. Let $t$ be the number of transitions of
 $\cP_1$ in  this sequence and let
 ${{\ell_\sm}} \leq j_1< \cdots< j_t \leq \ell_{\rt}$ be the indices such that
  $s_{j_n}\stackrel{(\lambda_{j_n+1},1)}\longrightarrow s_{j_{n}+1}$. Note that it must be the case that for each $1\leq n< t,$
 $$\Conf^{j_1}_{1}=\Conf^{\ell_\sm}_1,\ \Conf^{j_n+1}_1=\Conf^{j_{n+1}}_1 \mbox{ and } \Conf^{j_t+1}_1=\Conf^{\rt+1}_1.$$

 For  $1\leq n\leq t,$ let $$s'_{\ell_\sm+k+n}=(\Conf_0^{\rt+1}, \Conf_1^{j_n+1}).$$ 
 Observe now that, thanks to contextual locking, the set of locks held by $\cP_0$ in $\Conf_0^{\ell_\rt+1}$ is
  exactly the set of locks held by $\cP_0$ at
        $\Conf_0^{\ell_\mn}$ and the latter is  a subset of the locks held
 by $\cP_{0}$ in $\Conf_0^{\ell_{j_n}}$ for each $1\leq n\leq t.$
 Thus we can extend $\Comp_{mod}$
 by applying the $t$  transitions of $\cP_1$  used to obtain $s_{j_n}\longrightarrow s_{j_n+1}$ in $\Comp_{nwb}$.
 In other words, $\Comp_{mod} $ is now
 $$s_0\longrightarrow \cdots \longrightarrow s'_{{\ell_\sm+k}}\stackrel{(\lambda_{j_1+1},1)}{\longrightarrow}s'_{{\ell_\sm+k+1}}\cdots\stackrel{(\lambda_{j_t+1},1)}{\longrightarrow}s'_{\ell_\sm+k+t}.$$

       Observe now that the extended $\Comp_{mod}$ is a sequence of $\rt+1$ transitions  and that the final configuration
       of $\Comp_{mod}$, $s'_{{\ell_\sm+k+t}}=(\Conf^{\rt+1}_{0},\Conf^{\rt+1}_1)$
       is exactly the configuration $s_{\rt+1}.$
      \item Thus, now we can extend $\Comp_{mod}$ as
      $$s_0\longrightarrow \cdots \longrightarrow s'_{\ell_\sm+k+t}=s_{\rt+1}\stackrel{(\lambda_{\rt+2},i_{\rt+2})}\longrightarrow \cdots \stackrel{(\lambda_m,i_{m})}\longrightarrow  s_m.$$
        Clearly, $\Comp_{mod}$ has the same length as $\Comp_{nwb}$ and simultaneously reaches $p$ and $q.$    There is also no  witness $(\ell'_1,\ell'_2,\ell'_3)$ of non-well-bracketing in $\Comp_{mod}$ with
$\ell'_1\leq \ell_\mn.$  The lemma follows. \qedhere  \end{enumerate} 
\end{proof}
\subsection{Deciding   pairwise reachability}
We are ready to show that the problem of checking pairwise reachability is decidable.

\begin{theorem} There is an algorithm that
given a $2$-threaded program $\cCP=(\cP_0,\cP_1)$ communicating via $\Locks$ such that $\cP_0$ and $\cP_1$ follow contextual locking, and control states $p$ and $q$
of $\cP_0$ and $\cP_1$ respectively decides if $Reach(\cP,p,q)$ is true or not. Furthermore, if $m$ and $n$ are the sizes of
the  programs $\cP_0$ and $\cP_1$ and $\ell$ the number of elements of $\Locks,$ then this algorithm has a running time of $2^{O(\ell)}O((mn)^3).$

\end{theorem}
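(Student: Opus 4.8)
The plan is to reduce pairwise reachability to a single pushdown system reachability problem, exploiting the key Lemma that we need only consider well-bracketed computations. By the preceding Lemma, $Reach(\cCP,p,q)$ holds if and only if there is a \emph{well-bracketed} computation reaching $p$ and $q$ simultaneously. The central observation is that a well-bracketed computation of the two-threaded system behaves like the computation of a \emph{single} recursive program: in a well-bracketed computation, whenever a call of $\cP_0$ is pending, no call of $\cP_1$ that was made after it can still be pending when it returns, and vice versa. This means the two stacks can be merged into one, since their push/pop activity nests properly --- exactly the discipline a single pushdown stack enforces.

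First I would construct a product PDS $\cA$ that simulates well-bracketed computations of $\cCP$. Its control states would track the pair $(\ControlSt_0, \ControlSt_1)$ of current control states of both threads, together with the global lock assignment recording which locks are held by $\cP_0$, by $\cP_1$, or are free (a component of size $3^{\ell}$, or $2^{O(\ell)}$). The single stack of $\cA$ interleaves the stack symbols of the two threads; a $\push$ by thread $i$ pushes the corresponding symbol of $\Gamma_i$ onto the shared stack, and a $\pop$ by thread $i$ pops it. Because we only simulate well-bracketed computations, when $\cA$ pops it will always find the correct symbol on top: well-bracketing guarantees that the most recently pushed, still-unreturned call belongs to the thread currently returning. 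State, acquire, and release transitions of each thread update only the control-state and lock components and leave the stack untouched; acquire of $l$ is enabled only when $l$ is marked free, and it reassigns $l$ to the acting thread.

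Next I would establish the correspondence in both directions: every well-bracketed computation of $\cCP$ maps to a run of $\cA$ on the merged stack, and conversely every accepting run of $\cA$ projects back to a well-bracketed computation of $\cCP$ respecting the lock semantics. The forward direction is routine given well-bracketing. The subtle direction is the converse --- verifying that the stack discipline of $\cA$ does not admit spurious runs in which, say, a $\pop$ of $\cP_0$ is matched against a stack symbol pushed by $\cP_1$; because each thread's stack alphabet is disjoint and pops are typed by thread, such a mismatch is simply not a legal transition, so $\cA$'s stack faithfully represents two properly nested substacks. I then reduce to control-state reachability in $\cA$: ask whether a configuration with control component $(p,q,\cdot)$ is reachable from the initial configuration, which by the standard result on pushdown reachability~\cite{bem97} is decidable in time polynomial in the number of states and cubic in the size of $\cA$.

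The main obstacle, and the step deserving the most care, is bounding the running time. The number of control states of $\cA$ is $O(mn)\cdot 2^{O(\ell)}$, coming from the product of control states of the two threads and the $3^{\ell}$ lock assignments. The cubic dependence of pushdown reachability~\cite{bem97} on the number of control states would naively give $(mn)^3 \cdot 2^{O(\ell)}$, which matches the claimed bound $2^{O(\ell)}O((mn)^3)$; I would verify that the stack-alphabet and transition counts of $\cA$ stay polynomial in $m,n$ and contribute no worse than this, so that the exponential factor is isolated in the lock component alone and appears only as a multiplicative $2^{O(\ell)}$ rather than interacting with the polynomial factors. Care is needed to confirm that the saturation procedure of~\cite{bem97} indeed yields the stated cubic-in-states, linear-in-transitions complexity on $\cA$.
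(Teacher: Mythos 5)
Your proposal is correct and follows essentially the same route as the paper: the paper likewise builds a single product PDS whose control states pair the two threads' control states with their held-lock sets (its $(\Q_0\times 2^{\Locks})\times(\Q_1\times 2^{\Locks})$ versus your ternary lock-assignment, both $2^{O(\ell)}$), merges the two stacks into one stack whose per-thread projections recover the individual stacks --- with well-bracketing guaranteeing, exactly as you argue via the disjoint, thread-typed stack alphabets, that pops never mismatch --- and then invokes the pushdown reachability result of~\cite{bem97} to obtain the $2^{O(\ell)}O((mn)^3)$ bound.
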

\begin{proof}
The main idea behind the algorithm is to construct a single PDS $\cP_{comb}=(\Q,\Gamma,\qs,\delta)$ which simulates all the well-bracketed
computations of  $\cCP.$  $\cP_{comb}$ simulates a well-bracketed computation as follows.
The set of control states of $\cP_{comb}$ is the product of control states of $\cP_0$ and $\cP_1.$
The single stack of $\cP_{comb}$ keeps track of the stacks of $\cP_0$ and $\cP_1$: it
is the sequence of those calls of the well-bracketed computation which have not been returned.
 Furthermore, if the stack of $\cP_{comb}$
is $w$ then  the stack of $\cP_0$ is the projection of $w$ onto the stack symbols of $\cP_0$ and the
stack of $\cP_1$ is the projection  of $w$ onto the stack symbols of $\cP_1.$ Thus, the top of the stack is the most recent unreturned call and if it belongs to $\cP_i,$ well-bracketing ensures
that no previous unreturned call is returned without returning this call.

Formally, $\cP_{comb}=(\Q,\Gamma,\qs,\delta)$ is defined as follows. Let $\cP_0=(\Q_0,\Gamma_0,\qs_0,\delta_0)$ and
$\cP_1=(\Q_1,\Gamma_1,\qs_1,\delta_1).$ Without loss of generality,  assume that $\Q_0\cap\Q_1=\emptyset$
and $\Gamma_0\cap\Gamma_1=\emptyset.$

\begin{iteMize}{$\bullet$}
\item   The set of states $\Q$ is $(\Q_0 \times 2^\Locks) \times (\Q_1\times 2^\Locks).$

  \item $\Gamma=\Gamma_0\union \Gamma_1.$ 
    \item $\qs=((\qs_0,\emptyset),(\qs_1,\emptyset)).$
     \item $\delta$ consists of three sets $\delta_\state,\delta_\push$ and $\delta_\pop$ which simulate the
             internal actions, procedure calls and returns, and lock acquisitions and releases of the threads
              as follows. We explain here only the simulation of actions of $\cP_0$ (the simulation of actions of $\cP_1$ is similar).
                 \begin{iteMize}{$-$}
                 \item {\it Internal actions.} If $(q_0,q_0')$ is an internal action of $\cP_0,$ then for each $\held_0,\held_1\in 2^\Locks$ and $q_1\in \Q_1 $
                  $$(((q_0, \held_0),(q_1,\held_1)), ((q'_0,\held_0),(q_1,\held_1))) \in \delta_\state.$$
                  \item {\it Lock acquisitions.} Lock acquisitions are also modeled by $\delta_\state.$ If $(q_0,(q_0',l))$ is a lock
                    acquisition action of thread $\cP_0,$
                   then for each  $\held_0,\held_1\in 2^\Locks$ and $q_1\in \Q_1,$
                  $$(((q_0, \held_0),(q_1,\held_1)), ((q'_0,\held_0\union \{l\}),(q_1,\held_1))) \in \delta_\state \mbox{ if } l \notin \held_0 \union \held_1.$$
                   \item {\it Lock releases.} Lock releases are also modeled by $\delta_\state.$ If $((q_0,l),q_0')$ is a lock
                    release action of thread $\cP_0,$
                   then for each  $\held_0,\held_1\in 2^\Locks$ and $q_1\in \Q_1,$
                  $$(((q_0, \held_0),(q_1,\held_1)), ((q'_0,\held_0\setminus \{l\}),(q_1,\held_1))) \in \delta_\state \mbox{ if } l \in \held_0.$$
                    \item {\it Procedure Calls.}     Procedure calls are modeled  by $\delta_\push.$ If $(q_0,(q_0',a))$
                    is a procedure call of thread $\cP_0$ then for each
                    $\held_0,\held_1\in 2^\Locks$ and $q_1\in \Q_1,$
                    $$(((q_0, \held_0),(q_1,\held_1)), (((q'_0,\held_0),(q_1,\held_1)),a)) \in \delta_\push.$$

                   \item {\it Procedure Returns.}     Procedure returns are modeled  by $\delta_\pop.$ If $((q_0,a),q_0')$
                    is a procedure return of thread $\cP_0$ then for each
                    $\held_0,\held_1\in 2^\Locks$ and $q_1\in \Q_1, $
                    $$( ((  (q_0, \held_0),(q_1,\held_1)),a), ((q'_0,\held_0),(q_1,\held_1))) \in \delta_\pop.$$

                                  \end{iteMize}

\end{iteMize}

\noindent It is easy to see that $(p,q)$ is reachable in $\cCP$ by a
well-bracketed computation iff there is a computation of $\cP_{comb}$
which reaches $((p,\held_0),(q,\held_1))$ for some $\held_0,\held_1\in
2^\Locks.$ The complexity of the results follows from the observations
in~\cite{bem97} and the size of $\cP_{comb}$. \end{proof}

\section{Reentrant locking}
\label{sec:reentrant}
We now turn our attention to reentrant locking. Recall that a reentrant lock is a recursive lock which allows the thread owning a lock
to acquire the lock multiple times; the owning thread must release the lock an equal number of times before another thread can acquire
the lock.  Thus, the  set of configurations of a  thread  $\cP$ using \emph{reentrant locks} is the set $ \Q\times \Gamma^*\times \Nats^\Locks$
(and not the set  $\Q\times \Gamma^*\times 2^\Locks$ as in non-reentrant locks).  Intuitively, the elements of a configuration $(q,w,\held)$ of $\cP$ now have the following meaning: $q$ is the control state of $\cP,$ $w$ the contents of the stack and $\held:\Locks\to \Nats$ is a function that tells the number of times each lock has been acquired by $\cP.$ The
semantics of a PDS $\cP$ using reentrant $\Locks$ is formally defined as:\footnote{The definition of PDS $\cP$ using reentrant locks $\cP$ is exactly the definition of PDS $\cP$ using locks $\cP$ (see Section \ref{sec:model}).} 
 \begin{definition}
 A PDS $\cP=(\Q,\Gamma,\qs,\delta)$ using reentrant $\Locks$ gives  a labeled
 transition relation $\longrightarrow_\cP \subseteq (2^\Locks \times (\Q\times \Gamma^* \times \Nats^\Locks))\times\Labels\times (2^\Locks \times(\Q\times \Gamma^* \times  \Nats^\Locks))$ where $\Labels=\{\state,\push,\pop\}\union\{\acq(l),
 \rel(l)\ |\ l \in \Locks\}$ and
 $\longrightarrow_\cP$ is defined as follows.
 \begin{iteMize}{$\bullet$}
 \item $\free:(q,w,\held)\  {\stackrel{\state}{\longrightarrow}}_\cP\ \free:(q',w,\held)$ if $(q,q')\in \delta_\state.$
 \item $\free:(q,w,\held) \stackrel{\push}{\longrightarrow}_\cP \free:(q',wa,\held)$ if $(q,(q',a))\in \delta_\push.$
 \item $\free:(q,wa,\held) {\stackrel{\pop}\longrightarrow}_\cP\ \free:(q',w,\held)$ if $((q,a),q')\in \delta_\pop.$
 \item $\free:(q,w,\held)\ {\stackrel{\acq(l)}{\longrightarrow}}_\cP\ \free\setminus{\{l\}}:(q',w,\held|_{l\mapsto \held(l)+1})$ if $(q,(q',l))\in \delta_\acq$ and either $l\in \free$ or $\held(l)>0$.
 \item $\free:(q,w,\held)\ {\stackrel{\rel(l)}\longrightarrow}_\cP\ \free:(q',w,\held|_{l\mapsto \held(l)-1})$ if  $((q,l),q')\in \delta_\rel$ and
 $ \held(l)>1$.

 \item $\free:(q,w,\held)\ {\stackrel{\rel(l)}\longrightarrow}_\cP\ \free\union{\{l\}}:(q',w,\held|_{l\mapsto 0})$ if  $((q,l),q')\in \delta_\rel$ and
 $ \held(l)=1$.
 \end{iteMize}
 \end{definition}
 
The semantics of $n$-PDS $\cCP=(\cP_1,\ldots, \cP_n)$ communicating via reentrant $\Locks$ is given as a
transition system on  $ (\Q_1\times \Gamma_1^* \times \Nats^\Locks)\times \cdots \times(\Q_n\times \Gamma_n^* \times \Nats^\Locks)$
where $\Q_i$ and $\Gamma_i$ are the set of states and the stack alphabet  of process $\cP_i$ respectively. Formally,
\begin{definition}
The semantics of a $n$-PDS $\cCP=(\cP_1,\ldots, \cP_n)$ communicating via reentrant $\Locks$  is given as a labeled transition system $T=(S,s_0,\longrightarrow)$ where
\begin{iteMize}{$\bullet$}
\item $S$ is said to be the set of configurations of $\cCP$ and  is the set $ (\Q_1\times \Gamma_1^* \times \Nats^\Locks)\times \cdots \times(\Q_n\times \Gamma_n^* \times \Nats^\Locks),$ where $\Q_i$ is the set of  states of $\cP_i$ and $\Gamma_i$ is the stack alphabet of $\cP_i.$
\item The initial configuration $s_0$ is $(qs_1,$ $ \epsilon, \overline{0}), \cdots, (qs_m,\epsilon,\overline{0})$ where $qs_i$ is the initial local state of $\cP_i$ and $\overline{0}\in \Nats^\Locks$ is the function which takes the value $0$ for each 
$l\in \Locks.$

\item The set of labels on the transitions  is $\Labels \times \{1,\ldots, n\}$ where $\Labels = \{\state,\push,\pop\}\union\{
\acq(l),\rel(l)\ |\ l \in \Locks\}.$
The labeled transition relation $\stackrel{(\lambda,i)}{\longrightarrow}$ is defined as follows.
$$(q_1,w_1, \held_1),\cdots, (q_n,w_n,\held_n))\stackrel{(\lambda,i)}{\longrightarrow}
 ((q'_1,w'_1, \held'_1),\cdots, (q'_n,w'_n,\held'_n)) $$
iff  for all $j\ne i$,
      $q_j=q'_j,$ $w_j=w_j'$ and $\held_j=\held_j'$ and  
      $$\begin{array}{l} \Locks \setminus \union_{1\leq r\leq n} \set{l\st {\held_r(l)>0}}: ((q,q_i),w_i,\held_i) \stackrel{\lambda}\longrightarrow_{\cP_i}\\ 
      \hspace*{3.5cm}\Locks \setminus \union_{1\leq r\leq n} \set{l\st{\held'_r(l)>0}}:((q,q_i'),w_i',\held_i').\end{array}$$
\end{iteMize}
\end{definition}

\begin{notation}
Given a configuration $s= ((q_1,w_1, \held_1),\cdots, (q_n,w_n,\held_n))$ of a $n$-PDS $\cCP$ using reentrant $\Locks$,
we say that $\LockHeld_i(s)=\held_i.$
\end{notation}

\subsection{Contextual locking}
We now adapt contextual locking to reentrant locks. Informally, contextual reentrant locking means that--
\begin{iteMize}{$\bullet$}
\item each instance of a lock acquired by a thread in a
procedure call must be released before the corresponding return is executed, and
\item the instances of locks held by a thread just before a procedure call is executed are not released during the
execution of the procedure.
\end{iteMize}
Formally,
 \begin{definition}
A thread $i$ in a $n$-PDS $\cCP=(\cP_1,\ldots, \cP_n)$ is said to follow contextual locking if whenever $s_\ell \stackrel {(\push,i)}\longrightarrow s_{\ell+1}$
and $s_{j}\stackrel  {(\pop,i)} \longrightarrow s_{j+1}$ are matching procedure call and return  along a computation
$s_0   \stackrel{(\lambda_1,i)}\longrightarrow s_1  \cdots \stackrel {(\lambda_m,i)}\longrightarrow s_m,$
we have that
$$
\LockHeld_i(s_\ell)= \LockHeld_i(s_{j+1}) \mbox{ and for all } \ell\leq r \leq j.\ \LockHeld_i(s_\ell) \leq \LockHeld_i(s_r). $$
 \end{definition}

\subsection{Pairwise reachability problem}
We are ready to show that the pairwise reachability problem becomes undecidable if we have reentrant locks. 

\begin{lemma}
\label{thm:contextundecid}
The following problem is undecidable:
Given a $2$-PDS $\cCP=(\cP_1, \cP_2)$ communicating only via reentrant locks $\Locks$  s.t. $\cP_1$ and $\cP_2$ follow contextual locking, and control states $q_1$ and
$q_2$ of  threads $\cP_1$ and $\cP_2,$  check if $Reach(\cCP,q_1,q_2)$ is  true. The problem continues to remain undecidable even when $P_1$ and $P_2$ are finite state systems, \emph{i.e.}, $P_1$ and  $P_2$ do not use their stack during any computation.
\end{lemma}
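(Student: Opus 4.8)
The plan is to reduce the emptiness (halting) problem for a deterministic $2$-counter Minsky machine to the pairwise reachability problem for a $2$-PDS communicating via reentrant locks under contextual locking. Since the counter machine's halting problem is undecidable, this yields undecidability of pairwise reachability. The central idea, as hinted in the introduction, is that a reentrant lock can encode a counter: the value of counter $c$ is represented by the number of times the controlling thread has acquired the corresponding reentrant lock $l_c$. Because reentrant locking lets the owning thread re-acquire a lock it already holds, successive $\acq(l_c)$ actions increment the lock's acquisition count without blocking, and $\rel(l_c)$ decrements it; this directly mirrors counter increment and decrement. Crucially, I want both $\cP_1$ and $\cP_2$ to be finite-state (no stack use), so the counters live entirely in the reentrant lock counts rather than on any stack.

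First I would set up the two threads. I plan to let $\cP_1$ be the \emph{simulator} thread that walks through the control graph of the Minsky machine, with one control state per instruction. I will use two reentrant locks $l_1, l_2 \in \Locks$, one per counter. An increment of counter $i$ is implemented by $\cP_1$ executing $\acq(l_i)$, and a decrement by $\rel(l_i)$; note that the reentrant semantics guarantees $\acq(l_i)$ never blocks once $\cP_1$ already owns $l_i$ (it only increments the internal count), and $\rel(l_i)$ is enabled exactly when the count is positive, which is precisely the guard for a legal decrement. The delicate operation is the zero-test, since the lock count is not something a thread can directly inspect. Here I would use $\cP_2$ as a \emph{tester} thread together with synchronization: a counter $c_i$ is zero exactly when lock $l_i$ is globally free, i.e. held by no thread. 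To test $c_i = 0$, thread $\cP_1$ signals its intent and hands off to $\cP_2$, which attempts to acquire $l_i$; $\cP_2$ succeeds in acquiring $l_i$ if and only if $\cP_1$ holds it zero times (the count is $0$), and otherwise blocks. By arranging $\cP_2$ to then immediately release and signal success back, the zero-test passes only when the counter really is zero. The auxiliary handshake itself can be done with one or two further reentrant locks used in the ordinary blocking way.

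The main steps in order are: (1) fix the encoding of counter values as reentrant lock counts held by $\cP_1$; (2) translate each Minsky instruction (increment, decrement, conditional jump on zero) into a short gadget of $\acq/\rel/\state$ transitions in $\cP_1$, with the zero-test gadget delegating to $\cP_2$ via a synchronization protocol; (3) designate halting control states $q_1$ in $\cP_1$ and $q_2$ in $\cP_2$ that are simultaneously reachable if and only if the machine halts; (4) verify the contextual-locking side condition. Step (4) is where I must be careful, and it is essentially free in this construction: since both threads are \emph{non-recursive} and never push, there are \emph{no} matching procedure calls and returns at all, so the contextual-locking constraint $\LockHeld_i(s_\ell) = \LockHeld_i(s_{j+1})$ and $\LockHeld_i(s_\ell) \leq \LockHeld_i(s_r)$ is quantified over an empty set of call/return pairs and holds vacuously. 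This is exactly why the stronger claim --- undecidability even for finite-state $\cP_1$ and $\cP_2$ --- comes out of the same reduction.

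The step I expect to be the genuine obstacle is designing the zero-test gadget so that it is both \emph{sound} and \emph{faithful} under the interleaving semantics: I must ensure that $\cP_2$ can acquire $l_i$ precisely when the count is zero and at no other time, and that the handshake cannot be ``cheated'' by an unintended interleaving in which, say, $\cP_2$ grabs $l_i$ during a transient moment when $\cP_1$ has released it for an unrelated reason. I would guard against this by making the simulation strictly turn-based, using dedicated blocking synchronization locks so that at any moment exactly one thread is active and the other is parked waiting on a lock; this serializes the computation and removes spurious interleavings. The correctness argument then becomes a routine induction showing a bijection between halting runs of the Minsky machine and computations of $\cCP$ that reach $(q_1,q_2)$, with the reentrant lock counts tracking the counter values as an invariant throughout.
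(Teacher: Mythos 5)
Your proposal is correct and matches the paper's own proof in all essentials: a reduction from the halting problem for two-counter machines in which counter values are encoded as reentrant-lock acquisition counts held by the simulator thread, increments and decrements become $\acq(l_i)$ and $\rel(l_i)$, the zero-test is performed by a second thread whose attempted acquisition of $l_i$ succeeds exactly when the lock is free, spurious interleavings are ruled out by a blocking-lock handshake that serializes the two threads, and contextual locking holds vacuously because both threads are finite-state and never push or pop. The paper merely makes your sketch concrete (locks $r_i,t_i$ for the per-test handshake and $h,h'$ for initializing the lock distribution before the simulation starts), so there is nothing substantive to add.
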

\begin{proof}
We shall show a reduction from the halting problem of a two-counter machine (on empty input)  to the pairwise control state reachability problem. 

A two-counter machine $\cM$ is a tuple $(Q,q_s,q_f, \Delta)$:  $Q$ is a finite set of states, $q_s\in\Q$ is the initial state, $q_f\in Q$ is the final state and $\Delta$ is a tuple $(\Delta_{state},\{\Delta_{inc_i},\Delta_{dec_i}, \Delta_{z_i}\}_{i=1,2})$ where
 $\Delta_{state} \subseteq Q\times Q$ is the set of state transitions of $\cM$   for each $i=1,2$,  $\Delta_{inc_i}\subseteq  Q\times Q$ is the set  of increment transitions of
the  counter $i$, $\Delta_{dec_i}\subseteq  Q\times Q$ is the set of decrement transitions of
the  counter $i$, and $\Delta_{z_i}\subseteq  Q\times Q$ is the set of zero-tests of the  counter $i$.
A configuration of $\cM$ is a triple 
$(q,c_1,c_2)$ where $q$ is the ``current control state", and $c_1\in\Nats$ and $c_2\in \Nats$ are the values of the counters $1$ and $2$ respectively.  
$\cM$ is said to \emph{halt}  if there is a computation starting from $(q_s,0,0)$ and reaching a configuration $(q_f,c_1,c_2)$ for some $c_1,c_2\in\Nats.$
The halting problem asks if $\cM$ halts.

Fix a two-counter machine $\cM=(Q,q_s,q_f, \Delta).$ We will construct a  finite set $\Locks$, a $2$-PDS $\cCP=(\cP_1, \cP_2)$ communicating via reentrant $\Locks$ and states $q_1,q_2$ of $\cP_1$ and $\cP_2$ respectively such that $\cM$ halts iff $Reach(\cCP,q_1,q_2)$ is true. 

The set $\Locks$ will be the set $\set{h, h', r_1,r_2,l_1,l_2,t_1,t_2}.$ 
The $2$-PDS $\cCP$ will not have any recursion, \emph{i.e.}, its stack alphabet  will be empty. $\cCP$ will constructed in two steps. In the first step, we construct a $2$-PDS $\cCP^0=(\cP^0_1,\cP^0_2)$ communicating via reentrant   $\Locks$, a configuration $init=((p,\epsilon, \held_1),(q,\epsilon,\held_2))$ of $\cCP^0$ and states $q_1$ and $q_2$ of $\cP^0_1$ and $\cP^0_2$ respectively such that there is a computation of $\cCP^0$ starting from $init$ and ending in a state $((q_1,\epsilon,\held_1'),(q_2,\epsilon,\held_2'))$ for some $\held_1',\held_2'$ iff $\cM$ halts. In the next step of the construction, we will show how to ``set up'' the appropriate values of $\held_1$ and $\held_2$ initially.

We construct $\cCP^0$ as follows. The locks $h$ and $h'$ will not play a role in the construction of $\cCP^0$, but will
be used later to initialize $\held_1$ and $\held_2$.
 Intuitively,  the program $\cCP^0$  simulates a computation of $\cM$.
The simulation is mostly carried out by the thread $\cP^0_1$. $\cP^0_1$  maintains the control state
 of $\cM$ as well as the values of the two counters  using the locks $l_1$ and $l_2$: 
 the contents of the counter $i$ are stored as number of times  the lock $l_i$ is acquired by $\cP^0_1.$ 
 An increment of the counter $i$ is simulated by an acquisition of the lock $l_i$ and a decrement of $c_i$ is simulated by release of the lock $l_i.$ A state transition of $\cM$ is modeled by an internal action of $\cP^0_1.$

 A zero-test of  counter $i$ is carried by synchronization of threads $\cP^0_1$ and $\cP^0_2$ and involves  the locks 
 $r_i$ and $t_i.$ The zero test on counter $i$ is carried out as follows.  
 For the zero test to be carried out, $\cP^0_2$, must be in a ``ready" state $q_\ast.$  
 Before a zero-test on counter $i$ is carried out,  
  thread $\cP^0_1$ holds one instance of lock $r_i$ and thread $\cP^0_2$ holds one instance of lock $t_i.$ To carry out the zero-test, $\cP^0_1$ carries out the following sequence of lock acquisitions and releases:
  $$\acq(t_i)\rel(r_i)\acq(l_i)\rel(t_i)\acq(r_i)\rel(l_i) $$ and
 $\cP^0_2$ carries out the following sequence:
 $$\acq(l_i)\rel(t_i)\acq(r_i)\rel(l_i)\acq(t_i)\rel(r_i). $$

 Intuitively, the zero-test ``commences" by $\cP^0_2$ trying to acquire lock $l_i.$ If the lock acquisition succeeds then $\cP^0_2$ ``learns" that counter $i$ is indeed 
zero. $\cP^0_2$ then releases the lock $t_i$ in order to ``inform" $\cP^0_1$ that the counter $i$ is $0$ and ``waits" for 
thread $\cP^0_1$ to update its state. To test that counter $i$ is indeed $0$, thread $\cP^0_1$ tries to acquire
lock $t_i.$ If it succeeds in acquiring the lock then thread $\cP^0_1$ learns that counter $i$ is indeed $0.$ It updates its state, releases lock $r_i$ to  ``inform"  thread $\cP^0_2$ that it has updated the state. Thread $\cP^0_2$ learns that $\cP^0_1$  has updated its state by acquiring lock $r_i.$ Now, $\cP^0_2$  releases lock $l_i$ so that it can ``tell" $\cP^0_1$ to release the lock $t_i.$
Thread $\cP^0_1$ acquires lock $l_i;$ releases lock $t_i$ and waits for lock $r_i$ to be released. $\cP^0_2$ can now acquire  the lock $t_i$. After acquiring $t_i,$  $\cP^0_2$  releases lock $r_i$ and transitions back to the state $q_\ast.$ Thread $\cP^0_1$ acquires lock $r_i$ and releases lock $l_i.$ 
Now, we have that $\cP^0_1$ holds lock $r_i$ and $\cP^0_2$ holds lock $t_i$ as before and nobody holds $l_i$. Now $\cP^0_1$ can continue
simulating $\cM$.

Let $\held_1(h)=\held_1(r_1)=\held_1(r_2)=1 $ and  $\held_1(h')=\held_1(t_1)=\held_1(t_2)=\held_1(l_1)=\held_1(l_2)=0.$ Let
 $\held_2(h)=\held_2(r_1)=\held_2(r_2)=\held'_2(l_1)=\held_2'(l_2)=0,$ $\held_2(h')=\held_2(t_1)=\held_2(t_2)=1.$ 
Let $init=((q_s,\epsilon,\held_1),(q_\ast,\epsilon,\held_2)).$
The construction of
$\cP^0$ ensures that there is a computation of $\cCP^0$ starting from $init$ and ending in a state $((q_f,\epsilon,\held_1'),(q_\ast,\epsilon,\held_2'))$ for some $\held_1',\held_2'$ iff $\cM$ halts.

Now, $\cCP=(\cP_1,\cP_2)$ is constructed from $\cCP^0$ as follows. The thread $\cP_1$ initially performs the following sequence of lock acquisitions and releases:
$$\acq(h')\acq(r_1)\acq(r_2)\acq(h)\rel(h'); $$ 
makes a transition to $q_s$ and starts behaving like thread $\cP^0_1.$
The thread $\cP_2$ initially performs the following sequence of lock acquisitions and releases:
$$\acq(h)\acq(t_1)\acq(t_2)\rel(h)\acq(h'); $$
makes a transition to $q_\ast$ and starts behaving like thread $\cP^0_2.$ 

The construction ensures that $Reach(\cCP,q_f,q_\ast)$ is true iff $\cM$ halts. 
The formal construction of $\cCP$ has been carried out in the Appendix.
\end{proof}

\section{Conclusions}
\label{sec:conclusions}

The paper investigates the problem of pairwise reachability of
multi-threaded programs communicating using only locks. We identified
a new restriction on locking patterns, called contextual locking,
which requires threads to release locks in the same calling context in
which they were acquired. Contextual locking appears to be a natural
restriction adhered to by many programs in practice. The main result
of the paper is that the problem of pairwise reachability is decidable
in polynomial time for programs in which the locking scheme is
contextual. Therefore, in addition to being a natural restriction to
follow, contextual locking may also be more amenable to 
practical analysis.  We observe that these results do not follow from
results in~\cite{kig05,kg06,kah09,kah11} as there are programs
with contextual locking that do not adhere to the nested locking
principle or the bounded lock chaining principle. The proof principles
underlying the decidability results are also different. Our results can also be mildly extended to handling programs that release locks a bounded stack-depth away from when they were acquired
   (for example, to handle procedures that call a function that acquires a lock, and calls another to release it before it returns).

There are a few open problems immediately motivated by the results on contextual locking in
this paper. First, decidability of model checking with respect to
fragments of LTL under the contextual locking restriction remains
open. Next, while our paper establishes the decidability of pairwise
reachability, it is open if the problem of checking if 3 (or more) threads
simultaneously reach given local states is decidable for programs with
contextual locking. Finally, from a practical standpoint, one would
like to develop analysis algorithms that avoid to construct the
cross-product of the two programs to check pairwise reachability.

We also considered the case of reentrant locking mechanism and established that the pairwise reachability under
contextual reentrant locking is undecidable. The status of the pairwise reachability problem for the case when the
locks are nested (and not necessarily contextual) is open. This appears to be a very difficult problem. Our reasons for
believing this is that the problem of checking control state reachability in a PDS with \emph{one} counter and no zero tests
can be reduced to the problem of checking pairwise reachability problem  in a $2$-threaded program communicating via a single (and hence nested) reentrant lock.   
The latter is a long standing open problem.  

For a more complete account for multi-threaded programs, other synchronization
primitives such as thread creation and barriers should be taken into account. Combining
lock-based approaches such as ours with techniques for other primitives is left to
future investigation.

\subsection{Acknowledgements.}
P. Madhusudan was supported in part by NSF Career Award 0747041. Mahesh Viswanathan was 
supported in part by NSF CNS 1016791 and NSF CCF 1016989. Rohit Chadha was at LSV, ENS Cachan \& INRIA
during the time research was carried out.

\bibliographystyle{plain}
\bibliography{refs}
\appendix
\section{Construction of $\cCP$ in the proof of Theorem~\ref{thm:contextundecid}}

We carry out the construction of $\cCP$  in the proof of Theorem~\ref{thm:contextundecid}. 

Recall that  $\cM=(Q,q_s,q_f, \Delta)$ is  a two  counter machine and $\Delta$ is the tuple $(\Delta_{state},\{\Delta_{inc_i},$ $\Delta_{dec_i}, \Delta_{z_i}\}_{i=1,2}).$ $\cP$ is a $2$-PDS communicating via reenterant 
$\Locks=\set{h, h', r_1,r_2,l_1,l_2,t_1,t_2}.$ Recall also that $\cCP$ is constructed in two steps. First a $2$-PDS 
$\cCP=(\cP^0_1, \cP^0_2)$ is constructed and then extended to $\cCP.$  

Formally,  the set of states of $\cP^0_1$ is  $Q^1=Q\, \union\, (Q\times\set{1,2,3,4,5} \times \set{1,2}).$ Intuitively, the state
$(q,j,i)\in Q\times\set{1,2,3,4,5} \times \set{1,2}$ means that the counter $i$ is being tested for zero, the thread $\cP^0_1$ has completed its $j^{th}$ step in the test    and $q$ will be the resulting state after the zero test is completed.   
The transitions $\delta^1$
of  $\cP^0_1$ 
are defined as follows. The set of internal actions of $\cP^0_1$, $\delta_\state,$ is the set $\delta_{state}.$ 
The set of lock acquisitions $\delta_\acq$ is the union of sets $\delta_{\acq_1}$ and $\delta_{\acq_2}$ where
$\delta_{\acq_i}$ is the set
$
\set{(q^\dagger,(q^\ddagger,l_i))\st (q^\dagger,q^\ddagger)\in \delta_{inc_i}} \union \set{(q^\dagger,((q^\ddagger,1,i),t_i)) \st
 (q^\dagger,q^\ddagger)\in \delta_{z_i}}
 \union \set {((q,2,i), ((q,3,i),l_i) ) \st q\in Q} \union \set {((q,4,i), ((q,5,i),r_i) ) \st q\in Q}.$ The set of lock releases $\delta_\rel$ is the union of sets $\delta_{\rel_1}$ and $\delta_{\rel_2}$ where
$\delta_{\rel_i}$ is the set
$
\set{((q^\dagger,l_i),(q^\ddagger))\st (q^\dagger,q^\ddagger)\in \delta_{dec_i}} \union \set{(((q,1,i),r_i), (q,2,i) )\st q\in Q} \union \set{(((q,3,i),t_i), (q,4,i) )\st q\in Q} \union \set{(((q,5,i),l_i), q )\st q\in Q}.$

The set of states of $\cP^0_2$ is the set $\Q^2=\set{q_\ast} \union (\set{0,1,2,3,4,5}\times \set{1,2})$
where $q_\ast$ is a new state. Intuitively, the state $q_\ast$ means that $\cP^0_2$ is ready to test a counter.
The state $(j,i)$ means that the counter $i$ is being tested for zero and the thread $\cP^0_2$ has completed its $j^{th}$ step in the test.  The set of transitions $\delta'$
of  $\cP^0_2$ consists of only lock acquisition transitions and lock release transitions.     The set of lock acquisitions $\delta'_\acq$ is the union of sets $\delta'_{\acq_1}$ and $\delta'_{\acq_2}$, where
$\delta'_{\acq_i}=\set{(q_\ast,((1,i),l_i))\, ,\, ((2,i),((3,i),r_i))\, ,\, ((4,i),((5,i),t_i))}.$ The set of lock releases $\delta'_\rel$ is the union of sets $\delta'_{\rel_1}$ and $\delta'_{\rel_2}$, where
$\delta'_{\rel_i}=\set{(((1,i),t_i), (2,i))\, , \, ( ((3,i),l_i),(4,i) )\, ,$ $ \,(((5,i), r_i ), q_\ast )}.$

Now the $2$-PDS $\cCP=(\cP_1,\cP_2)$ is constructed as follows. For $\cP_1,$  the set of states is $Q_1=Q^1\union\set{q_0,q_1,q_2,q_3,q_4}$ where $q_0,q_1,q_2,q_3,q_4$ are new states. In addition to $\delta^1,$ the set of
transitions of $\cP_1$
 contains the  lock acquisition transitions 
$(q_0,(q_1,h'))$, $ (q_1,(q_2,r_1))$, $(q_2,(q_3,r_2))$, $(q_3,(q_4,h)) $ and the lock release transition $((q_4,h'),q_s).$ The state $q_0$ is the initial state of $\cP_1.$

For $\cP_2,$ the set of states is $Q_2=Q^2\union\set{q'_0,q'_1,q'_2,q'_3,q'_4}$ where $q'_0,q'_1,q'_2,q'_3,q'_4$ are new states. In addition to $\delta^2,$ the set of
transitions of $\cP_2$
contains the  lock acquisition transitions $(q'_0,(q'_1,h))$, $(q'_1,(q'_2,t_1))$, $(q'_2,(q'_3,t_2))$,  $(q'_4,(q'_\ast,h')) $ and the lock release transition $\set{((q'_3,h),q'_4)}.$ The state $q'_0$ is the initial state of $\cP_2.$

\end{document}